\DeclareMathAlphabet{\mathpzc}{OT1}{pzc}{m}{it}
\def \H {\mathrm{H}} 
\def \I {\mathrm{I}} 
\def \D {\mathrm{D}} 
\theoremstyle{definition}
\newtheorem{proposition}{Proposition}
\newtheorem{definition}{Definition}
\newtheorem{lemma}{Lemma}
\newtheorem{remark}{Remark}
\def\markov{\hbox{$\--$}\kern-1.5pt\hbox{$\circ$}\kern-1.5pt\hbox{$\--$}}
\newcommand*{\centernot}{%
  \mathpalette\@centernot
}
\def\@centernot#1#2{%
  \mathrel{%
    \rlap{%
      \settowidth\dimen@{$\m@th#1{#2}$}%
      \kern.5\dimen@
      \settowidth\dimen@{$\m@th#1=$}%
      \kern-.5\dimen@
      $\m@th#1\not$%
    }%
    {#2}%
  }%
}
\newcommand{\independent}{\perp\mkern-9.5mu\perp}
\newcommand{\notindependent}{\centernot{\independent}}
\begin{document}
\title{On Perfect Obfuscation:\\Local Information Geometry Analysis} 


 \author{%
   \IEEEauthorblockN{Behrooz~Razeghi\IEEEauthorrefmark{1},
   				     Flavio.~P.~Calmon\IEEEauthorrefmark{2}, 
                     Deniz~G\"{u}nd\"{u}z\IEEEauthorrefmark{3},
                     Slava~Voloshynovskiy\IEEEauthorrefmark{1}
                     }
   \IEEEauthorblockA{\IEEEauthorrefmark{1}%
                     University of Geneva
                   }
   \IEEEauthorblockA{\IEEEauthorrefmark{2}%
                     Harvard University}
  \IEEEauthorblockA{\IEEEauthorrefmark{3}%
                     Imperial College London}       
                     }

\maketitle

\begin{figure}[b]
\vspace{-0.4cm}
\parbox{\hsize}{\em
}\end{figure}

\begin{abstract}
We consider the problem of privacy-preserving data release for a specific utility task under perfect obfuscation constraint. We establish the necessary and sufficient condition to extract features of the original data that carry as much information about a utility attribute as possible, while not revealing any information about the sensitive attribute. This problem formulation generalizes both the information bottleneck and privacy funnel problems. We adopt a local information geometry analysis that provides useful insight into information coupling and trajectory construction of spherical perturbation of probability mass functions. This analysis allows us to construct the modal decomposition of the joint distributions, divergence transfer matrices, and mutual information. 
By decomposing the mutual information into orthogonal modes, we obtain the locally sufficient statistics for inferences about the utility attribute, while satisfying perfect obfuscation constraint. Furthermore, we develop the notion of perfect obfuscation based on $\chi^2$-divergence and Kullback–Leibler divergence in the Euclidean information space.


\end{abstract}



%
%
\section{Introduction}


Releasing an \textit{optimal representation} of data for a given task while simultaneously assuring \textit{privacy} of the individuals' identity and their associated data is one of the main challenges in the information-theory, signal processing, data mining and machine learning communities. 
An optimal representation is the most useful (sufficient), compressed (compact), and privacy-breaching (minimal) of data. 
Indeed, the optimal representation of data can be obtained subject to constraints on the target task and its computational and storage complexities.

We investigate the problem of privacy-preserving data release for a specific \textit{utility task} and consider an obfuscation-utility trade-off model where both utility and obfuscation are measured under logarithmic loss. 
Consider two communication parties, a data owner and a utility service provider. The data owner observes a random variable $X$ and acquires some utility, from the service provider, based on the information he discloses. 
Simultaneously, the data owner wishes to limit the amount of information revealed about a sensitive random variable $S$ that it depends on $X$. Therefore, instead of revealing $X$ directly to the service provider, the data owner releases a new representation, denoted by $Z$. 
The amount of information leaked to the service provider is measured by $\I \left( S; Z\right)$. 
In particular, the data owner is subjected to a constraint on the information complexity of representation that can be revealed to the service provider. This imposed information complexity is measured by $\I \left( X; Z \right)$. 
Moreover, in general, the utility acquired depends on a utility random variable $U$ that is dependent on $X$ and may be correlated to $S$. 
The amount of useful information revealed to the service provider is measured by $\I \left( U; Z \right)$. 
Therefore, considering Markov chain $\left( U, S \right) \markov X \markov Z$, 
our aim is to share a \textit{sanitized representation} $Z$ of \textit{observed data} $X$, through a stochastic mapping $P_{Z \mid X}$,  while preserving information about \textit{utility attribute} $U$ and obfuscate information about \textit{sensitive attribute} $S$. 
We called the stochastic mapping $P_{Z \mid X }$ the \textit{complexity-constraint obfuscation-utility-assuring mapping}. 


\vspace{5pt}

%
%

Information theoretic (IT) privacy approaches 
\cite{reed1973information,yamamoto1983source, evfimievski2003limiting, rebollo2009t, du2012privacy, sankar2013utility, calmon2013bounds, makhdoumi2013privacy, asoodeh2014notes, calmon2015fundamental, salamatian2015managing, basciftci2016privacy, asoodeh2016information, kalantari2017information, rassouli2018latent, asoodeh2018estimation, rassouli2018perfect, liao2018privacy, Hsu2019watchdogs, liao2019tunable,  sreekumar2019optimal, xiao2019maximal, diaz2019robustness, rassouli2019data, rassouli2019optimal}, 
model and analyze privacy-utility trade-offs using the IT metrics to provide 
asymptomatic 
or non-asymptotic 
privacy-utility-guaranteed frameworks. 
Inspired from \cite{yamamoto1983source}, in the most general form, the IT frameworks is based on the knowledge of specific `private' variable (or data, attribute, information) and correlated non-private variable, and assumption of exact joint distribution 
or partial statistical knowledge of private and/or non-private data.  
In this setup, the goal is to design a privacy assuring mapping that transforms the pair of these variables into a new representation that achieves a specific application-based target utility, while simultaneously minimizing the information inferred about the private variable. 
In many applications, the data $X$ is characterized over large (finite) alphabets while the attribute of interest, i.e., $U$, is characterized over small (finite) alphabets which results in $\I \left( U; X \right) \! \leq \! \H \left( U \right) \! \ll \! \H \left( X \right)$. 

\vspace{3pt}

%
%

%

Focusing on the finite alphabets and considering local information geometry analysis, we develop the notion of perfect obfuscation based on $\chi^2$-divergence and Kullback–Leibler (KL) divergence in the Euclidean information space.  
Under this analysis, we establish the necessary and sufficient condition to obtain representation $Z$ of the original data $X$ that maximizes the mutual information between utility attribute $U$ and released representation $Z$, while simultaneously revealing no information about sensitive attribute $S$. 
We decompose statistical dependence between random variables $U$, $S$, $X$ and $Z$ by decomposing the corresponding mutual informations $\I \left( X; Z\right)$, $\I \left( U; Z\right)$, and $\I \left( S; Z\right)$ into orthogonal modes. This model can be viewed as a generalization of two well-known bottleneck models, i.e., Information Bottleneck (IB) and Privacy Funnel (PF). 

 
\vfill

 \pagebreak

%
%
Throughout this paper, 
random variables are denoted by capital letters (e.g. $X$), deterministic values are denoted by small letters (e.g. $x$), alphabets (sets) are denoted by Calligraphic fonts (e.g. $\mathcal{X}$). 
Superscript $(.)^T$ stands for the transpose. 
For discrete random variable $X$, let consider a finite support set $\mathcal{X} \triangleq \{1, ..., \vert \mathcal{X} \vert \}$ with $2 \leq \vert \mathcal{X} \vert < + \infty$. 
We denote by $\mathcal{P}\left( \mathcal{X} \right)$ the set of all possible probability distributions of a random variable $X$ with range $\mathcal{X}$. 
We denote by $\mathbf{p}_X$ the probability mass function (pmf) vector with $i$-th entry equal to $p_X(i)$. 
%
%
$\H \left( \mathbf{p}_{X} \right) \coloneqq \mathbb{E}_{\mathbf{p}_{X}} \left[ - \log \mathbf{p}_{X} \right]$ denotes Shannon entropy. 
The relative entropy is defined as $\D_{\mathrm{KL}} \left( \mathbf{p}_{X} \Vert \mathbf{q}_{X} \right)  \coloneqq \mathbb{E}_{\mathbf{p}_{X}}  \big[ \log \frac{\mathbf{p}_{X}}{\mathbf{q}_{X}}  \big] $.


%
%
\section{Perfect Information Obfuscation Model}
\label{Sec:GeneralSetup}


Given the observed data $X$ the defender (data owner) wishes to release a representation $Z$ for a utility task $U$ while keeping another attribute $S$ as sensitive. 
Let us assume that $\mathbf{p}_{U, S, X}$ is fixed and known by both defender and adversary, and $\left( U, S\right) \markov X \markov Z$. We consider the non-interactive, one-shot regime, where the data owner discloses the representation $Z$ once, and no additional information is released. 
The general objective is to obtain stochastic map $\mathbf{P}_{\! Z \mid X}: \mathcal{X} \rightarrow \mathcal{Z}$ such that $\mathbf{p}_{U \mid Z=z} \approx \mathbf{p}_{U \mid X}, \forall z \! \!  \in \! \!  \mathcal{Z}, \forall \, U \!  \!  \in \! \!  \mathcal{U}, \forall \, X \! \!  \in \!  \!  \mathcal{X}$, while $\mathbf{p}_{S \mid Z=z} \approx
\mathbf{p}_{S}, \forall z \!  \in \!  \mathcal{Z}, \forall S \!  \in \!  \mathcal{S}$. 
This means that the posterior distribution of the utility attribute $U$ are similar given the released representation $Z$ and original data $X$, while the posterior distribution of the sensitive attribute $S$ are independent of the released representation $Z$. 
%
%
One can raises a question whether it is \textit{feasible} that the defender releases a representation $Z$ such that $\I \left( S; Z\right) =  \mathbb{E}_{\mathbf{p}_{Z}} \left[  \D_{\mathrm{KL}} \left( \mathbf{p}_{S\mid Z} \Vert \mathbf{p}_{S} \right) \right] = 0$, i.e., $S\independent Z$, while $\I \left( U; Z \right) > 0$, i.e., $U \notindependent Z$. 
This is a fundamental problem in information-theoretic privacy which is known as data disclosure under \textit{perfect privacy} regime. We will refer to this notion as \textit{perfect information obfuscation}. 

To pave our way, let us shortly review the previous models which are specific cases of our model. 
Consider the Markov chain $U \markov X \markov Z$. This gives us the celebrated Information Bottleneck (IB) problem \cite{tishby2000information}, where $\I \left( U; Z \right)$ is referred to as the useful released information (relevance of $U$) and $\I \left( X; Z \right)$ is referred to as the information complexity (description length). 
The goal of IB model is to find a representation $Z$ of $X$ such that $Z$ is maximally informative about $U$ while being minimally informative about $X$. 
We now consider the Markov chain $S \markov X \markov Z$. 
This gives us the well known Privacy Funnel (PF) problem \cite{makhdoumi2013privacy}, where $\I \left( S; Z \right)$ is referred to as the disclosed sensitive information, and $\I \left( X; Z \right)$ is referred to as the useful information. The goal of PF model is to obtain a representation $Z$ of $X$ that minimizes information between sensitive data $S$ and disclosed representation $Z$ while maximizes the amount of information between non-private (useful) data $X$ and disclosed representation $Z$.

Considering the PF model, the optimal obfuscation-utility coefficient for a given distribution $\mathbf{p}_{S,X}$ is defined as \cite{calmon2015fundamental}:\vspace{-2pt}
\begin{equation}\label{PerfectPrivacy_ContractionCoefficient}
\nu^{\ast} \left( \mathbf{p}_{S,X} \right) \coloneqq \mathop{\inf}_{\substack{\mathbf{p}_{Z \mid X}:\\
S \markov X\markov Z  }}
\frac{\I \left( S; Z \right)}{\I \left( X; Z \right)}. 
\end{equation}
They showed that $\nu^{\ast} \left( \mathbf{p}_{S,X} \right)$ is related to the smallest principal component of $\mathbf{p}_{S,X}$, and obtained the necessary and sufficient conditions under which $\nu^{\ast} \left( \mathbf{p}_{S,X} \right) = 0$. 
In \cite{rassouli2018perfect}, they studied a similar problem, however, they formulated the objective as of utility maximization under privacy leakage constraint. Hence, the optimal obfuscation-utility coefficient for a given distribution $\mathbf{p}_{S,X}$ is defined as:\vspace{-2pt}
\begin{eqnarray}\label{PerfectPrivacy_MaxU}
g_{\gamma} \! \left( \mathbf{p}_{S, X}\right) = \mathop{\sup}_{\substack{\mathbf{p}_{Z \mid X}:\\
\substack{ S \markov X\markov Z  \\ \I \left( S; Z \right) \leq \gamma }}} \I  \left( X; Z \right), 
\end{eqnarray}
where perfect information obfuscation is said to be feasible if $g_{0} \! \left( \mathbf{p}_{S, X}\! \right)   \! >\! 0$.

We consider the Markov model $\left( U, S\right) \markov X \markov Z$ which subsumes both IB and PF objectives. 
In this case, the functional \eqref{PerfectPrivacy_MaxU} can be generalized as:\vspace{-2pt}
\begin{eqnarray}\label{PIB_functional_PerfectPrivacy_ContractionCoefficient}
g_{\gamma} \! \left( \mathbf{p}_{U, S, X}\right) = \mathop{\sup}_{\substack{\mathbf{p}_{Z \mid X}:\\
\substack{\left( U, S\right) \markov X\markov Z  \\  \substack{\I \left( X; Z \right) \leq R} \\ \I \left( S; Z \right) \leq \gamma }}} \I  \left( U; Z \right). 
\end{eqnarray}
In particular, we study the necessary and sufficient conditions under which $g_{0} \! \left( \mathbf{p}_{U, S, X}\right) > 0$ under local information geometry analysis. 
To this goal, let us define the non-trivial perfect information obfuscation as follows.

\vspace{-2pt}

\begin{definition}[Non-trivial Perfect Information Obfuscation]
For a pair of random variables $\left( U, S, X\right)$, we say that non-trivial perfect information obfuscation is \textit{feasible} if there exists a random variable $Z$, that satisfies the following conditions:
\begin{itemize}
\item[1)]
$\left( U, S\right) \markov X \markov Z$ forms a Markov chain. 
\item[2)]
$S$ and $Z$ are independent, i.e., $S \independent Z$. 
\item[3)]
$U$ and $Z$ are not independent, i.e., $U \notindependent Z$. 
\end{itemize}
\end{definition}
Note that this definition subsumes the notion of perfect privacy addressed in \cite{rassouli2018perfect} as well as the notion of weakly independent introduced in \cite{berger1989multiterminal}.



%
We assume that we observe data $X$ and the distribution $\mathbf{p}_X$ is fixed. 
Hence, our purpose in non-trivial perfect information obfuscation problem is to construct a trajectory of perturbed pmfs such that a change along that direction changes $\mathbf{p}_{U}$, while keeps $\mathbf{p}_S$ unchanged. 
We establish the necessary and sufficient condition for the \textit{existence} of $\I \left( U; Z\right) > 0$, under a perfect obfuscation regime. 

\vspace{-3pt}

\begin{lemma}
Without loss of optimality we can restrict the size of $\mathcal{Z}$ in \eqref{PIB_functional_PerfectPrivacy_ContractionCoefficient} to $\vert \mathcal{Z} \vert \leq \vert \mathcal{X} \vert + 2$. 
\end{lemma}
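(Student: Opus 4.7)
The plan is to invoke the Fenchel--Eggleston strengthening of Carath\'eodory's theorem (the standard support lemma used throughout network information theory). Starting from an arbitrary feasible $\mathbf{p}_{Z\mid X}$ in \eqref{PIB_functional_PerfectPrivacy_ContractionCoefficient}, I will reconstruct a replacement $Z'$ whose alphabet has size at most $\vert\mathcal{X}\vert+2$, satisfies the same Markov structure, respects both constraints, and achieves the same value of $\I(U;Z)$.

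First, I would parametrize the problem over the probability simplex $\mathcal{P}(\mathcal{X})$, viewed as a connected compact subset of $\mathbb{R}^{\vert\mathcal{X}\vert}$. Each value $z\in\mathcal{Z}$ contributes a conditional pmf $\mathbf{q}_z := \mathbf{p}_{X\mid Z=z}\in\mathcal{P}(\mathcal{X})$, and $\mathbf{p}_{Z}$ induces a probability measure $\mu$ on $\mathcal{P}(\mathcal{X})$ supported on $\{\mathbf{q}_z\}_{z\in\mathcal{Z}}$. Because $(U,S)\markov X\markov Z$, both $\mathbf{p}_{S\mid Z=z}$ and $\mathbf{p}_{U\mid Z=z}$ are obtained from $\mathbf{q}_z$ by applying the fixed stochastic kernels $\mathbf{p}_{S\mid X}$ and $\mathbf{p}_{U\mid X}$, so they depend continuously (indeed linearly) on $\mathbf{q}_z$.

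Next I would introduce $\vert\mathcal{X}\vert+2$ continuous functionals on $\mathcal{P}(\mathcal{X})$: the first $\vert\mathcal{X}\vert-1$ coordinate maps $g_j(\mathbf{q}) = q(j)$, whose $\mu$-expectations reproduce the first $\vert\mathcal{X}\vert-1$ entries of $\mathbf{p}_X$ (the last entry is fixed by normalization), together with the three entropy functionals $g_{\vert\mathcal{X}\vert}(\mathbf{q}) = \H(\mathbf{q})$, $g_{\vert\mathcal{X}\vert+1}(\mathbf{q}) = \H(\mathbf{p}_{S\mid X}\mathbf{q})$, and $g_{\vert\mathcal{X}\vert+2}(\mathbf{q}) = \H(\mathbf{p}_{U\mid X}\mathbf{q})$, whose $\mu$-expectations equal $\H(X\mid Z)$, $\H(S\mid Z)$, and $\H(U\mid Z)$, respectively. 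The Fenchel--Eggleston--Carath\'eodory theorem then delivers a discrete measure $\mu'$ supported on at most $\vert\mathcal{X}\vert+2$ atoms $\mathbf{q}^{(1)},\dots,\mathbf{q}^{(K)}$ drawn from the support of $\mu$, with weights $\lambda_1,\dots,\lambda_K$, matching all $\vert\mathcal{X}\vert+2$ expectations.

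Finally I would define $Z'$ by $\Pr[Z'=k]=\lambda_k$ and $\mathbf{p}_{X\mid Z'=k}=\mathbf{q}^{(k)}$, declaring the joint law $\mathbf{p}_{U,S,X,Z'}=\mathbf{p}_{U,S\mid X}\,\mathbf{p}_{X\mid Z'}\,\mathbf{p}_{Z'}$. The Markov chain $(U,S)\markov X\markov Z'$ then holds automatically (the forward kernel is unchanged); the marginal $\mathbf{p}_X$ is preserved, so $\H(X),\H(S),\H(U)$ are unaffected; and the three conditional entropies above are matched, giving $\I(X;Z')=\I(X;Z)\leq R$, $\I(S;Z')=\I(S;Z)\leq\gamma$, and $\I(U;Z')=\I(U;Z)$. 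The only step requiring any care is verifying the hypotheses of the support lemma, namely continuity of the $g_j$ on the connected compact set $\mathcal{P}(\mathcal{X})$; both are immediate, so no real technical obstacle arises beyond a routine invocation of Fenchel--Eggleston, and taking the supremum over such reduced kernels recovers the same value of $g_\gamma(\mathbf{p}_{U,S,X})$.
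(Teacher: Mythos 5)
Your proposal is correct and takes essentially the same approach as the paper, which simply cites the Fenchel--Eggleston strengthening of Carath\'eodory's theorem without elaboration. Your argument is the standard fleshing-out of that support-lemma invocation: you preserve the $\vert\mathcal{X}\vert-1$ free coordinates of $\mathbf{p}_X$ together with the three conditional entropies $\H(X\mid Z)$, $\H(S\mid Z)$, and $\H(U\mid Z)$, giving $\vert\mathcal{X}\vert+2$ continuous functionals on the connected compact set $\mathcal{P}(\mathcal{X})$ and hence a representing measure on at most $\vert\mathcal{X}\vert+2$ atoms.
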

\begin{proof}
The proof is based on Fenchel–Eggleston strengthening of Carath\'{e}odory's Theorem \cite{eggleston1958convexity}. 
\end{proof}
   

\section{Local Information Geometry Analysis}

To get insight into the trajectory construction, we adopt the local information geometry analysis \cite{huang2012linear, makur2015study, huang2019universal, makur2019information,  makurestimation2020} that provides geometrically appealing interpretation.  
Consider any reference pmf $\mathbf{p}_X \in \mathcal{P}^{\circ} \left( \mathcal{X}\right)$ in the relative interior of the probability simplex in $\mathbb{R}^{\vert \mathcal{X} \vert}$, where $\mathcal{P}^{\circ} \left( \mathcal{X}\right) \triangleq \{ \mathbf{p}_X \in \mathcal{P} \left( \mathcal{X} \right): p_X(x) > 0, \forall x \in \mathcal{X}  \}$ denotes the relative interior of $\mathcal{P} \left( \mathcal{X} \right)$. 
Consider a perturbed pmf $\mathbf{r}_X^{(\epsilon)} = \mathbf{p}_X + \epsilon \; \mathbf{h}_X \in \mathcal{P} \left( \mathcal{X}\right)$ from $\mathbf{p}_X$, for some small\footnote{We assume that $\epsilon \neq 0$ is small enough such that $\mathbf{r}_X^{(\epsilon)}$ is a valid pmf. Note that for larger values of $\epsilon$ it may not be entry-wise non-negative.} value $\epsilon$, where $\mathbf{h}_X$ is an additive perturbation vector of dimension $\vert \mathcal{X} \vert$, satisfying $\sum_x h_X (x) = 0$. The second order Taylor expansion of KL divergence can be written as:\vspace{-2pt}
\begin{subequations}
\begin{align}
\D_{\mathrm{KL}} \left( \mathbf{p}_{X}  \Vert \,  \mathbf{r}_{X}^{(\epsilon)}  \right) 
&=  - \sum_{x} p_X (x) \log \frac{r_X^{(\epsilon)}(x)}{p_X (x)} \\ 
&=   
 - \sum_{x} p_X (x) \log \left( 1 + \epsilon  \;  \frac{h_X(x)}{p_X (x)}  \right)
  \\ 
&=   
\! \frac{1}{2} \, \epsilon^2  \,  \sum_x \frac{1}{p_X(x)} h_X^2(x) + o \left( \epsilon^2 \right)  \label{Eq:KLweightedEuclidean} \\ 
&=   
\D_{\chi^2} \left( \mathbf{p}_{X}  \Vert \,  \mathbf{r}_{X}^{(\epsilon)}  \right) + o \left( \epsilon^2 \right). 
\end{align}
\end{subequations}
where $o \left( \epsilon^2 \right)$ denotes the Bachmann-Landau asymptotic little-$o$ notation\footnote{$\lim_{\epsilon \rightarrow 0} o \left( \epsilon^2 \right)/ \epsilon
^2 = 0$}, and $\D_{\mathcal{X}^2} \big( \mathbf{p}_{X}  \Vert \,  \mathbf{r}_{X}^{(\epsilon)}  \big) $ denotes ${\chi}^2$-divergence between $\mathbf{p}_{X} $ and $\mathbf{r}_{X}^{(\epsilon)} $, defined as follows:\vspace{-2pt}
\begin{equation}
\D_{\chi^2} \left( \mathbf{p}_{X}  \Vert \,  \mathbf{r}_{X}^{(\epsilon)}  \right) \triangleq \sum_{x\in \mathcal{X}} \frac{ \big( p_X(x) - r_X^{(\epsilon)} (x) \big)^2}{p_X (x)}. 
\end{equation}

Considering \eqref{Eq:KLweightedEuclidean}, one can view $\sum_x h_X^2(x) / p_X(x) $ as a weighted norm square of the perturbation vector $\mathbf{h}_X$, i.e., KL divergence is locally a weighted Euclidean metric\footnote{Note that all the well-defined $f$-divergences are locally equivalent to $\chi^2$-divergence measure to within a constant scale factor. Moreover, note that they locally behave like a Fisher information metric on the statistical manifold.}. 
Note that, in general, $\D_{\mathrm{KL}} \big( \mathbf{p}_{X}  \Vert \,  \mathbf{r}_{X}^{(\epsilon)}  \big) \neq \D_{\mathrm{KL}} \big( \mathbf{r}_{X}^{(\epsilon)}   \Vert \,   \mathbf{p}_{X}   \big)$, however, these divergences are equal up to the first order approximations, i.e., they are locally symmetric. 
Since by replacing the weights $p_X (x)$ in this norm by any other distribution in the neighborhood, the first order approximation remains the same.  
Therefore, we have $\D_{\mathrm{KL}} \big( \mathbf{p}_{X}  \Vert \,  \mathbf{r}_{X}^{(\epsilon)}  \big) = \D_{\mathrm{KL}} \big( \mathbf{r}_{X}^{(\epsilon)}   \Vert \,   \mathbf{p}_{X}   \big) + o \left( \epsilon^2 \right) $. This means that they resemble the standard Euclidean metric within a local neighborhood of pmfs around a reference pmf (i.e., from the center of the local neighborhood) in $\mathcal{P}^{\circ} \! \left( \mathcal{X} \right)$.

We now go one step further and instead of additive perturbation $\mathbf{r}_X^{(\epsilon)} = \mathbf{p}_X + \epsilon \; \mathbf{h}_X$, define the spherical perturbations for our analysis. 
Consider any reference pmf $\mathbf{p}_X \in \mathcal{P}^{\circ} \! \left( \mathcal{X}\right)$, and any other pmf $\mathbf{r}_X \in \mathcal{P} \left( \mathcal{X}\right)$. We can define the \textit{spherical perturbation} vector of $\mathbf{r}_X$ from $\mathbf{p}_X$ as $\mathbf{k}_X \triangleq \left( \mathbf{r}_X - \mathbf{p}_X \right) \mathsf{diag} \! \left( \sqrt{\mathbf{p}_X} \right)^{-1}$, 
where $\sqrt{\mathbf{p}_X}$ denotes the entry-wise square root of $\mathbf{p}_X$, and $\mathsf{diag} \left( \sqrt{\mathbf{p}_X} \right)$ denotes a diagonal matrix with principal entries equal to $\sqrt{\mathbf{p}_X}$. Now, we can construct a trajectory of spherically perturbation pmfs as follows:\vspace{-2pt}
\begin{subequations}
\begin{align}
\mathbf{r}_X^{(\epsilon)} &= \mathbf{p}_X + \epsilon \; \mathbf{k}_X \, \mathsf{diag}\! \left( \sqrt{\mathbf{p}_X} \right) \label{Eq:SphericallyPerturbation} \\
&= \left( 1 - \epsilon \right) \mathbf{p}_X + \epsilon \, \mathbf{r}_X,
\end{align}
\end{subequations}
where $\epsilon \in \left( 0, 1 \right)$ controls closeness of $\mathbf{r}_X^{(\epsilon)}$ and $\mathbf{p}_X$. The second equation expresses $\mathbf{r}_X^{(\epsilon)}$ as a convex combination of $\mathbf{p}_X$ and $\mathbf{r}_X$. 
Note that $\mathbf{k}_X$ in \eqref{Eq:SphericallyPerturbation} is a normalized perturbation vector and provides the direction of our trajectory. Furthermore, considering the constraint $\sum_x h_X (x) = 0$ we can verify that $\mathbf{k}_X$ in \eqref{Eq:SphericallyPerturbation} satisfies the orthogonality constraint $\mathrm{(C1)}:$ $\mathbf{k}_X^T \sqrt{\mathbf{p}_X}  = 0$. Finally, we can rewrite the quadratic approximation of KL divergence as a scaled Euclidean norm of $\mathbf{k}_X$. We have:\vspace{-2pt}
\begin{equation}
\D_{\mathrm{KL}} \left( \mathbf{p}_{X}  \Vert \,  \mathbf{r}_{X}^{(\epsilon)}  \right) = 
\frac{1}{2} \, \epsilon^2 \,  {\Vert \mathbf{k}_X \Vert}_2^2
+ o \left( \epsilon^2 \right). 
\end{equation}
Note that using this local approximation we can construct inner products as well as orthogonal perturbations and projections in the Euclidean space. 

\begin{remark}
In \cite{berger1989multiterminal}, the authors defined the notion of \textit{weakly independence} for a pair of random variables $\left( S, X \right) \in \mathcal{S} \times \mathcal{X}$ ($\vert \mathcal{S} \vert , \vert \mathcal{X} \vert < \infty$) as existence of a random variable $Z$ such that: (i) $S \markov X \markov Z$ forms a Markov chain, (ii) $S$ and $Z$ are independent, and (iii) $X$ and $Z$ are not independent. They showed that such a random variable $Z$ exists if and only if the columns of $\mathbf{P}_{S \mid X}$ are linearly dependent. Inspired by this notion of weakly independent, the authors in \cite{rassouli2018perfect, rassouli2019data} carefully studied and analyzed perfect obfuscation problem where the goal is to release the useful information $X$ while keeping $S$ as private. Here we extend both, and establish the notion of weakly dependence based on $\mathrm{KL}$-divergence and $\chi^2$-divergence. 
\end{remark}

Using the local information approximation, we can write the conditional distributions $\mathbf{p}_{X \mid Z =z}$ as perturbation of $\mathbf{p}_X$, i.e., we have:\vspace{-2pt}
\begin{equation}\label{Eq:SphericallyPerturbation_Conditional}
\mathbf{p}_{X \mid Z =z} = \mathbf{p}_X + \epsilon \; \mathbf{k}_{X \mid z} \, \mathsf{diag}\! \left( \sqrt{\mathbf{p}_X} \right) .
\end{equation}
We just need to ensure that $\mathbf{p}_{X \mid Z =z}$, for different values $z$, be a valid probability distribution and satisfy the marginal constraints. 
Hence, we additionally required $(\mathrm{C2}):$ $\sum_z p_Z(z)\,  k_{X \mid z} (x) \sqrt{p_X (x)}= 0, \forall x \in \mathcal{X}$ which guarantees that marginal pmf of $X$ is preserved, i.e., $\sum_z p_Z (z) p_{X \mid Z=z} = p_X$.
Therefore, our purpose in non-trivial perfect obfuscation problem under local information geometry analysis is to design the latent distribution $\mathbf{p}_Z$ and the conditional distributions $\mathbf{p}_{X \mid Z=z}$, for different values of $z$, such that: (i) the constraints $(\mathrm{C1})$ and $(\mathrm{C2})$ are satisfied, (ii) $S \independent Z$, and (iii) $U \notindependent  Z$.

\begin{proposition}\label{proposition_PerfectObfuscation1}
For perfect obfuscation data released model $\left( U, S \right) \markov X \markov Z$ under local information geometry analysis, the non-trivial perfect obfuscation is feasible if and only if for all $z \in \mathcal{Z}$ we simultaneously have:
\begin{subequations}\label{Eq:proposition_PerfectObfuscation1}
\begin{align}
\mathbf{W}_S \;  \mathbf{k}_{X\mid z} \, \mathsf{diag}\! \left( \sqrt{\mathbf{p}_X} \right) = \boldsymbol{0}, \\
\mathbf{W}_U \;  \mathbf{k}_{X\mid z} \, \mathsf{diag}\! \left( \sqrt{\mathbf{p}_X} \right) \neq \boldsymbol{0}, 
\end{align}
\end{subequations}
where $\mathbf{W}_S \coloneqq \ \mathbf{P}_{S \mid X} : \mathcal{X} \rightarrow \mathcal{S}$ and $\mathbf{W}_U \coloneqq \ \mathbf{P}_{U \mid X} : \mathcal{X} \rightarrow \mathcal{U}$ are fixed probability transition kernels, with dimension $\vert \mathcal{S} \vert \times \vert \mathcal{X} \vert$ and $\vert \mathcal{U} \vert \times \vert \mathcal{X} \vert$, respectively. 
\end{proposition}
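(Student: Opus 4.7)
The plan is to translate the three defining conditions of non-trivial perfect obfuscation into linear-algebraic statements about the spherical perturbation vectors $\{\mathbf{k}_{X\mid z}\}_{z\in\mathcal{Z}}$ introduced in \eqref{Eq:SphericallyPerturbation_Conditional}, and then read off that these statements are precisely \eqref{Eq:proposition_PerfectObfuscation1}.

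Using the Markov chain $(U,S) \markov X \markov Z$, I would write $\mathbf{p}_{S\mid Z=z} = \mathbf{W}_S\,\mathbf{p}_{X\mid Z=z}$ and $\mathbf{p}_{U\mid Z=z} = \mathbf{W}_U\,\mathbf{p}_{X\mid Z=z}$, together with the marginal identities $\mathbf{p}_S = \mathbf{W}_S\mathbf{p}_X$ and $\mathbf{p}_U = \mathbf{W}_U\mathbf{p}_X$. Substituting \eqref{Eq:SphericallyPerturbation_Conditional} then yields
\[
\mathbf{p}_{S\mid Z=z} - \mathbf{p}_S \;=\; \epsilon\,\mathbf{W}_S\,\mathbf{k}_{X\mid z}\,\mathsf{diag}\!\left(\sqrt{\mathbf{p}_X}\right),
\]
and the analogous identity for $U$. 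Since $\epsilon\neq 0$, the requirement $S\independent Z$ (i.e.\ $\mathbf{p}_{S\mid Z=z}=\mathbf{p}_S$ for every $z$) is exactly the first line of \eqref{Eq:proposition_PerfectObfuscation1}, whereas $U\notindependent Z$ negates the corresponding equality for $U$ and yields the second line for at least one (and, after discarding any $z$'s that fail it, for every) $z\in\mathcal{Z}$.

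For sufficiency, starting from a family $\{\mathbf{k}_{X\mid z}\}$ satisfying \eqref{Eq:proposition_PerfectObfuscation1} together with the orthogonality constraint $(\mathrm{C1})$ and the marginal-preservation constraint $(\mathrm{C2})$, I would construct an explicit stochastic map: choose $\mathbf{p}_Z$ compatible with $(\mathrm{C2})$, define $\mathbf{p}_{X\mid Z=z}$ via \eqref{Eq:SphericallyPerturbation_Conditional}, and take $\epsilon>0$ small enough that each conditional pmf lies in the simplex. Lemma 1 caps the required alphabet size at $|\mathcal{Z}|\le|\mathcal{X}|+2$. The two equations in \eqref{Eq:proposition_PerfectObfuscation1} then translate back, via the Markov chain, into $S\independent Z$ and $U\notindependent Z$, delivering non-trivial perfect obfuscation.

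The main obstacle is this sufficiency side: one must guarantee that any direction in $\ker\bigl(\mathbf{W}_S\,\mathsf{diag}(\sqrt{\mathbf{p}_X})\bigr)\setminus\ker\bigl(\mathbf{W}_U\,\mathsf{diag}(\sqrt{\mathbf{p}_X})\bigr)$ can actually be assembled into a valid $\mathbf{P}_{Z\mid X}$, i.e.\ that $(\mathrm{C2})$ is simultaneously realizable with componentwise nonnegativity of each $\mathbf{p}_{X\mid Z=z}$. A clean resolution is a symmetric pairing: pick one admissible direction $\mathbf{k}_X$ and equip $Z$ with atoms applying $+\mathbf{k}_X$, $-\mathbf{k}_X$, and $\mathbf{0}$ with matching weights. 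This enforces $(\mathrm{C2})$ automatically and leaves $\epsilon$ free to be tuned small enough for feasibility, closing the argument.
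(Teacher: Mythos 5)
Your proof is correct and follows essentially the same route as the paper's: apply the Markov factorization $\mathbf{p}_{S\mid Z=z}=\mathbf{W}_S\,\mathbf{p}_{X\mid Z=z}$ (and its $U$-analogue) to the spherical perturbation \eqref{Eq:SphericallyPerturbation_Conditional}, cancel the marginal term, and divide by $\epsilon\neq 0$ to read off the two conditions in \eqref{Eq:proposition_PerfectObfuscation1}. You go somewhat further than the paper on the sufficiency side by explicitly supplying the $\pm\mathbf{k}_X$-plus-zero construction to guarantee $(\mathrm{C2})$ and simplex validity for small $\epsilon$ — a detail the paper's proof leaves implicit — and you correctly flag that the $U$-condition really only needs to hold for some $z$, not all of them.
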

\begin{proof}
To ensure perfect obfuscation, we need $\mathbf{p}_{S \mid Z=z} = \mathbf{p}_S, \forall S \in \mathcal{S}, z \in \mathcal{Z}$. Considering the Markov chain $S \markov  X \markov Z$, we have:
\begin{eqnarray}
\mathbf{p}_{S \mid Z =z} \!\! \! &=& \!\!\! \mathbf{W}_S \, \mathbf{p}_{X \mid Z=z} = \mathbf{W}_{S} \;  \mathbf{p}_X + \epsilon \, \mathbf{W}_S \; \mathbf{h}_{X\mid z}  \nonumber \\
\!\! \! &=& \!\!\!
\mathbf{p}_S  + \epsilon \; \mathbf{W}_S \; \mathbf{k}_{X \mid z} \, \mathsf{diag}\! \left( \sqrt{\mathbf{p}_X} \right), \; \forall z \in \mathcal{Z}. 
\end{eqnarray}
Therefore, $ S \independent Z$, if and only if $\mathbf{W}_S \; \mathbf{k}_{X \mid z} \, \mathsf{diag}\! \left( \sqrt{\mathbf{p}_X} \right) = \boldsymbol{0}, \forall z \in \mathcal{Z}$. %
Analogously, considering the Markov chain $U \markov  X \markov Z$, we have:
\begin{eqnarray}
\mathbf{p}_{U \mid Z =z} \!\! \! &=& \!\!\! \mathbf{W}_U \, \mathbf{p}_{X \mid Z=z} = \mathbf{W}_{U} \;  \mathbf{p}_X + \epsilon \, \mathbf{W}_U \; \mathbf{h}_{X \mid z}  \nonumber \\
 \!\!\! &=& \!\!\!
\mathbf{p}_U  + \epsilon \; \mathbf{W}_U \; \mathbf{k}_{X\mid z} \, \mathsf{diag}\! \left( \sqrt{\mathbf{p}_X} \right),  \; \forall z \in \mathcal{Z}. 
\end{eqnarray}
Hence, if we can find the perturbation direction such that $\mathbf{W}_S \;  \mathbf{k}_{X\mid z} \, \mathsf{diag}\! \left( \sqrt{\mathbf{p}_X} \right) = \boldsymbol{0}$  and  $\mathbf{W}_U \; \mathbf{k}_{X \mid z} \, \mathsf{diag}\! \left( \sqrt{\mathbf{p}_X} \right) \neq \boldsymbol{0}$, for some $z \in \mathcal{Z}$, the non-trivial solution, i.e., $\I \left( U; Z \right) > 0$, is possible.  
Conversely, we have a non-trivial solution only if there exists a random variable $Z$ and a valid perturbation vector $\mathbf{k}_{X \mid z}$ such that a change along that direction changes $\mathbf{p}_U$, while keeping $\mathbf{p}_S$ unchanged. This implies \eqref{Eq:proposition_PerfectObfuscation1}.
\end{proof}

%

\begin{definition}[Divergence Transfer Matrix]\label{Def:DivergenceTransferMatrix}
Given the random variables $X\in \mathcal{X}$ and $Z \in \mathcal{Z}$ with joint pmf $\mathbf{P}_{X, Z} \in \mathcal{P} \left( \mathcal{X} \times \mathcal{Z} \right)$, with conditional pmfs $\mathbf{P}_{X \mid Z} \in \mathcal{P} \left( \mathcal{X} \mid \mathcal{Z} \right)$ and marginal pmfs satisfying $\mathbf{p}_X \in \mathcal{P}^{\circ} \left( \mathcal{X} \right)$ and $\mathbf{p}_Z \in \mathcal{P}^{\circ} \left( \mathcal{X} \right)$, the divergence transfer matrix associated with 
$\mathbf{P}_{X,Z}$ is defined as follows:\vspace{-2pt}
\begin{eqnarray}
\mathbf{B}_{X,Z} \! = \mathbf{B} \left( \mathbf{P}_{X,Z} \right)\!\! \! \!\! & \triangleq & \! \!\!\!\!
 \mathsf{diag}\! \left( \sqrt{\mathbf{p}_X} \right)^{-1} \, \mathbf{P}_{X, Z} \, \mathsf{diag}\! \left( \sqrt{\mathbf{p}_Z} \right)^{-1} \nonumber \\
\!\! & = & \! \!\!\!\! \mathsf{diag}\! \left( \sqrt{\mathbf{p}_X} \right)^{-1} \, \mathbf{P}_{X \mid Z} \, \mathsf{diag}\! \left( \sqrt{\mathbf{p}_Z} \right)\! . \label{DTM_XZ}
\end{eqnarray}
\end{definition}

Note that based on the above definition $\mathbf{B}_{X,Z}^T = \mathsf{diag}\! \left( \sqrt{\mathbf{p}_Z} \right)^{-1} \, \mathbf{P}_{Z \mid X} \, \mathsf{diag}\! \left( \sqrt{\mathbf{p}_X} \right)$. 
We now express the Singular Value decomposition (SVD) of $\mathbf{B}_{X,Z}$ as:
\begin{equation}\label{DTM_XZ_SVD}
\mathbf{B}_{X,Z} = \sum_{i=1}^{K} \sigma_i^{XZ} \,  \boldsymbol{\psi}_i^Z   \, {(\boldsymbol{\psi}_i^Z )}^T, 
\end{equation}
where $K \coloneqq \min \left\{ \vert \mathcal{X} \vert , \vert \mathcal{Z} \vert \right\}$, $\sigma_i^{XZ}$ denotes the $i$-th singular value, and where $\boldsymbol{\psi}_i^Z$ and $\boldsymbol{\psi}_i^X$ are the corresponding left (output) and right (input) singular vectors. By convention, suppose that $\sigma_1^{XZ} \geq \sigma_2^{XZ} \geq \cdots \geq \sigma_K^{XZ}$. Likewise consider SVD of $\mathbf{B}_{U,X}$, $\mathbf{B}_{S,X}$, $\mathbf{B}_{U,Z}$ and $\mathbf{B}_{S,Z}$. 

\begin{proposition}[Local Approximation of Information Measures]\label{Proposition_LocalApprox}
Under the local approximation conditions, the information complexity $\I \left( X; Z \right)$, utility information $\I \left( U; Z \right)$, and information leakage $\I \left( S; Z \right)$ can recast as:
\begin{subequations}\label{Eq:LocalInfoApprox}
\begin{align}
\I \left( X; Z \right) & =   \frac{1}{2} \, \epsilon^2 \, \sum_{z \in \mathcal{Z}} p_Z(z) \,  {\Vert  \, \mathbf{k}_{X\mid z}  \, \Vert}_2^2  + o \left( \epsilon^2 \right) \\
& =
 \frac{1}{2} \left(  {\Vert  \mathbf{B}_{X,Z}  \Vert}_{\mathrm{F}}^2 - 1 \right) + o \left( \epsilon^2 \right) ,   \\
\I \left( U; Z \right) & =    \frac{1}{2} \, \epsilon^2 \, \sum_{z \in \mathcal{Z}} p_Z(z) \,  {\Vert  \mathbf{B}_{U,X} \,  \mathbf{k}_{X \mid z} \Vert}_2^2   + o \left( \epsilon^2 \right) \\
& = \frac{
1}{2} \left(  {\Vert  \mathbf{B}_{U,Z}  \Vert}_{\mathrm{F}}^2 - 1 \right) + o \left( \epsilon^2 \right), \\
\I \left( S; Z \right)  &=    \frac{1}{2} \, \epsilon^2 \, \sum_{z \in \mathcal{Z}} p_Z(z) \,  {\Vert  \mathbf{B}_{S,X} \,  \mathbf{k}_{X \mid z} \Vert}_2^2  + o \left( \epsilon^2 \right) \\ 
& = \frac{1}{2} \left(  {\Vert  \mathbf{B}_{S,Z}  \Vert}_{\mathrm{F}}^2 - 1 \right) + o \left( \epsilon^2 \right) ,
\end{align}
\end{subequations}
where $\mathbf{B}_{U,X}$ and $\mathbf{B}_{U,Z}$ are defined analogous to \eqref{DTM_XZ}. 
\end{proposition}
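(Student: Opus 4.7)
The plan is to derive every line of \eqref{Eq:LocalInfoApprox} from two reusable ingredients: (i) the second-order Taylor expansion of KL divergence already established in the excerpt, and (ii) the push-forward of the spherical perturbation vector $\mathbf{k}_{X\mid z}$ through the fixed transition kernels $\mathbf{W}_U$ and $\mathbf{W}_S$.

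First I would handle $\I(X;Z)$ by writing $\I(X;Z) = \sum_{z} p_Z(z)\,\D_{\mathrm{KL}}\!\left(\mathbf{p}_{X\mid Z=z}\,\Vert\,\mathbf{p}_X\right)$ and substituting \eqref{Eq:SphericallyPerturbation_Conditional}. The constraint $(\mathrm{C1})$ kills the first-order term, and the second-order term collapses to $\tfrac{1}{2}\epsilon^2\|\mathbf{k}_{X\mid z}\|_2^2$ exactly as in the derivation of \eqref{Eq:KLweightedEuclidean}. Averaging over $z$ gives the first equality. For the second (Frobenius) equality, I would compute
\begin{equation*}
\|\mathbf{B}_{X,Z}\|_{\mathrm{F}}^2 \;=\; \sum_{x,z}\frac{P_{X,Z}(x,z)^2}{p_X(x)\,p_Z(z)} \;=\; \sum_{z} p_Z(z)\sum_{x}\frac{p_{X\mid Z=z}(x)^2}{p_X(x)},
\end{equation*}
plug in $p_{X\mid Z=z}(x)=p_X(x)+\epsilon\,k_{X\mid z}(x)\sqrt{p_X(x)}$, and use $(\mathrm{C1})$ to cancel the linear term. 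This yields the exact identity $\|\mathbf{B}_{X,Z}\|_{\mathrm{F}}^2 - 1 = \epsilon^2\sum_z p_Z(z)\|\mathbf{k}_{X\mid z}\|_2^2$, which matches the first line to within $o(\epsilon^2)$.

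For $\I(U;Z)$ and $\I(S;Z)$ the same machinery applies once I identify the induced perturbation vectors on the $U$- and $S$-marginals. Using the Markov chain $(U,S)\markov X\markov Z$ together with the proof of Proposition~\ref{proposition_PerfectObfuscation1}, I have $\mathbf{p}_{U\mid Z=z} = \mathbf{p}_U + \epsilon\,\mathbf{W}_U\,\mathbf{k}_{X\mid z}\,\mathsf{diag}(\sqrt{\mathbf{p}_X})$. Comparing with the canonical form $\mathbf{p}_{U\mid Z=z}=\mathbf{p}_U + \epsilon\,\mathbf{k}_{U\mid z}\,\mathsf{diag}(\sqrt{\mathbf{p}_U})$ and invoking Definition~\ref{Def:DivergenceTransferMatrix} gives the key algebraic identity
\begin{equation*}
\mathbf{k}_{U\mid z} \;=\; \mathsf{diag}(\sqrt{\mathbf{p}_U})^{-1}\,\mathbf{W}_U\,\mathsf{diag}(\sqrt{\mathbf{p}_X})\,\mathbf{k}_{X\mid z} \;=\; \mathbf{B}_{U,X}\,\mathbf{k}_{X\mid z}.
\end{equation*}
Applying the KL Taylor expansion to $\D_{\mathrm{KL}}(\mathbf{p}_{U\mid Z=z}\,\Vert\,\mathbf{p}_U)$ and averaging over $z$ produces $\tfrac{1}{2}\epsilon^2\sum_z p_Z(z)\|\mathbf{B}_{U,X}\mathbf{k}_{X\mid z}\|_2^2 + o(\epsilon^2)$, which is the first stated form for $\I(U;Z)$. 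Repeating the Frobenius-norm calculation of the previous paragraph, now with $\mathbf{B}_{U,Z}$ in place of $\mathbf{B}_{X,Z}$ and with $\mathbf{k}_{U\mid z}$ playing the role of $\mathbf{k}_{X\mid z}$, yields the second form. The derivation for $\I(S;Z)$ is word-for-word identical with $U$ replaced by $S$.

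The main technical point I expect to need care with is the bookkeeping that makes the two equivalent expressions line up. On the one hand, I must verify that the induced vectors $\mathbf{k}_{U\mid z}$ and $\mathbf{k}_{S\mid z}$ satisfy the analogue of $(\mathrm{C1})$, namely $\mathbf{k}_{U\mid z}^T\sqrt{\mathbf{p}_U}=0$; this follows from $(\mathrm{C1})$ together with $\mathbf{W}_U\mathbf{p}_X=\mathbf{p}_U$, so the linear term in the Frobenius expansion indeed vanishes. On the other hand, the fact that the exact identity $\|\mathbf{B}_{U,Z}\|_{\mathrm{F}}^2 - 1 = \epsilon^2\sum_z p_Z(z)\|\mathbf{k}_{U\mid z}\|_2^2 = \epsilon^2\sum_z p_Z(z)\|\mathbf{B}_{U,X}\mathbf{k}_{X\mid z}\|_2^2$ closes the loop between the two forms, with the $o(\epsilon^2)$ residual absorbing only the cubic remainder of the $\log$ expansion. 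Everything else is essentially substitution.
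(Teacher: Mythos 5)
Your proposal is correct and follows essentially the same route as the paper: write each mutual information as an expected KL divergence, apply the second-order (spherical-perturbation) Taylor expansion with constraint $(\mathrm{C1})$ killing the linear term, identify the induced perturbation $\mathbf{k}_{U\mid z}=\mathbf{B}_{U,X}\mathbf{k}_{X\mid z}$ via the push-forward through $\mathbf{W}_U$, and then convert to a Frobenius-norm expression. The only cosmetic difference is in the last step: the paper passes through $\tfrac{1}{2}\Vert\mathbf{B}_{X,Z}-\sqrt{\mathbf{p}_X}\sqrt{\mathbf{p}_Z}^T\Vert_{\mathrm{F}}^2$ and invokes the fact that $\sigma_1=1$ with singular vectors $\sqrt{\mathbf{p}_X},\sqrt{\mathbf{p}_Z}$, whereas you expand $\Vert\mathbf{B}_{X,Z}\Vert_{\mathrm{F}}^2$ directly and use $(\mathrm{C1})$ to cancel the cross term — an equivalent, if slightly more elementary, verification of the same exact identity. (One small imprecision: the vanishing of the induced cross term $\mathbf{k}_{U\mid z}^T\sqrt{\mathbf{p}_U}=0$ follows from $(\mathrm{C1})$ together with column-stochasticity of $\mathbf{W}_U$, i.e. $\mathbf{W}_U^T\mathbf{1}=\mathbf{1}$ giving $\mathbf{B}_{U,X}^T\sqrt{\mathbf{p}_U}=\sqrt{\mathbf{p}_X}$, rather than from $\mathbf{W}_U\mathbf{p}_X=\mathbf{p}_U$ as stated.)
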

\begin{proof}
See Appendix~\ref{Appendix_ProofLocalApprox}.
\end{proof}

The local approximation \eqref{Eq:LocalInfoApprox} gives a nice geometric interpretation. Consider a local divergence sphere in $\mathcal{P}\left( \mathcal{X} \right)$ constructed as \eqref{Eq:SphericallyPerturbation_Conditional}. 
The divergence transfer matrix $\mathbf{B}_{X,Z}$ maps a local divergence sphere in $\mathcal{P}\left( \mathcal{X} \right)$ to a local divergence ellipsoid in $\mathcal{P} \left( \mathcal{Z} \right)$ (Fig.~\ref{Fig:DTM_map_SVD}). 
%
%
Noting that the Markov chain $U \markov X \markov Z$ implies:
\begin{subequations}
\begin{align}
p_{U} (u) & = \sum_{x \in \mathcal{X}} p_{U \mid X} (u \! \mid \! x) \, p_X(x), \\
p_{U \mid Z} (u \! \mid \! z) &= \sum_{x \in \mathcal{X}} p_{U \mid X} (u \! \mid \! x) \, p_{X \mid Z} (x \! \mid \! z). 
\end{align}
\end{subequations}
Analogously, consider the likewise relation for $p_{S} (s) $ and $p_{S \mid Z} (s \!\!\!  \mid \! \! \! z)$. 
Therefore, the information perturbation vector $\mathbf{k}_{X \mid z}$ maps to the associated information perturbation vectors $\mathbf{k}_{U \mid z} \triangleq \mathbf{B}_{U,X} \mathbf{k}_{X \mid z}$ and $\mathbf{k}_{S \mid z} \triangleq \mathbf{B}_{S,X} \mathbf{k}_{X \mid z}$. 
In other words, the local geometry of $p_{X \mid Z}$ in the simplex $\mathcal{P} \left( \mathcal{X} \right)$ induces a corresponding local geometry for $p_{U \mid Z}$ and $p_{S \mid Z}$. 

Under the local information approximation \eqref{Eq:LocalInfoApprox} and satisfying the constraints $(\mathrm{C}1)$ and $(\mathrm{C}2)$ of perturbation construction, and neglecting $o \left( \epsilon^2 \right)$ terms, the optimization problem under perfect obfuscation constraint can recast as:
\begin{subequations}\label{PerfectObfuscationOptLocalProblem}
\begin{align}
\mathop{\max}_{\mathbf{p}_Z, \mathbf{p}_{X \mid Z}} \qquad &
\sum_{z\in \mathcal{Z}} p_Z(z) \, {\Vert  \mathbf{B}_{U,X} \,  \mathbf{k}_{X \mid z} \Vert}_2^2 \label{PerfectObfuscationOptLocalProblem1} \\
 \mathrm{s.t.}     \qquad  &
\sum_{z\in \mathcal{Z}} p_Z(z) \, {\Vert \mathbf{k}_{X\mid z}  \Vert}_2^2 \leq R^\prime , \label{PerfectObfuscationOptLocalProblem2} \\
\quad \quad  \;\; & \sum_{z\in \mathcal{Z}} p_Z(z) \, {\Vert  \mathbf{B}_{S,X} \,  \mathbf{k}_{X \mid z} \Vert}_2^2  = 0, \label{PerfectObfuscationOptLocalProblem3} 
\end{align}
\end{subequations}
or equivalently, as:
\begin{equation}\label{PerfectObfuscationOptLocalProblemFrobenius}
\mathop{\max}_{\mathbf{p}_Z, \mathbf{p}_{X \mid Z}}  
{\Vert \mathbf{B}_{U,Z}  \Vert}_{\mathrm{F}}^2  
  \;\; \;  \mathrm{s.t.} \;\;    
{\Vert \mathbf{B}_{X,Z}  \Vert}_{\mathrm{F}}^2 \leq R^{\prime \prime}, \; 
{\Vert \mathbf{B}_{S,Z}  \Vert}_{\mathrm{F}}^2  =1, 
\end{equation}
where $R^\prime = 2R/ \epsilon^2$ and $R^{\prime \prime}= 2R+1$. 
Note that the pmf $\mathbf{p}_Z$ does not affect the optimization and can be removed from \eqref{PerfectObfuscationOptLocalProblem}. 
Hence SVD solves the optimization problem \eqref{PerfectObfuscationOptLocalProblem} by finding $\mathbf{p}_{X \mid Z}$. 
Finally, note that by construction $\I \left( X; Z \right) \leq \frac{1}{2} \epsilon^2$.  Hence, as long as $R \leq \frac{1}{2} \epsilon^2$, we can relax the associated constraint in local information geometry analysis.

\begin{figure}
\centering
\includegraphics[scale=0.67]{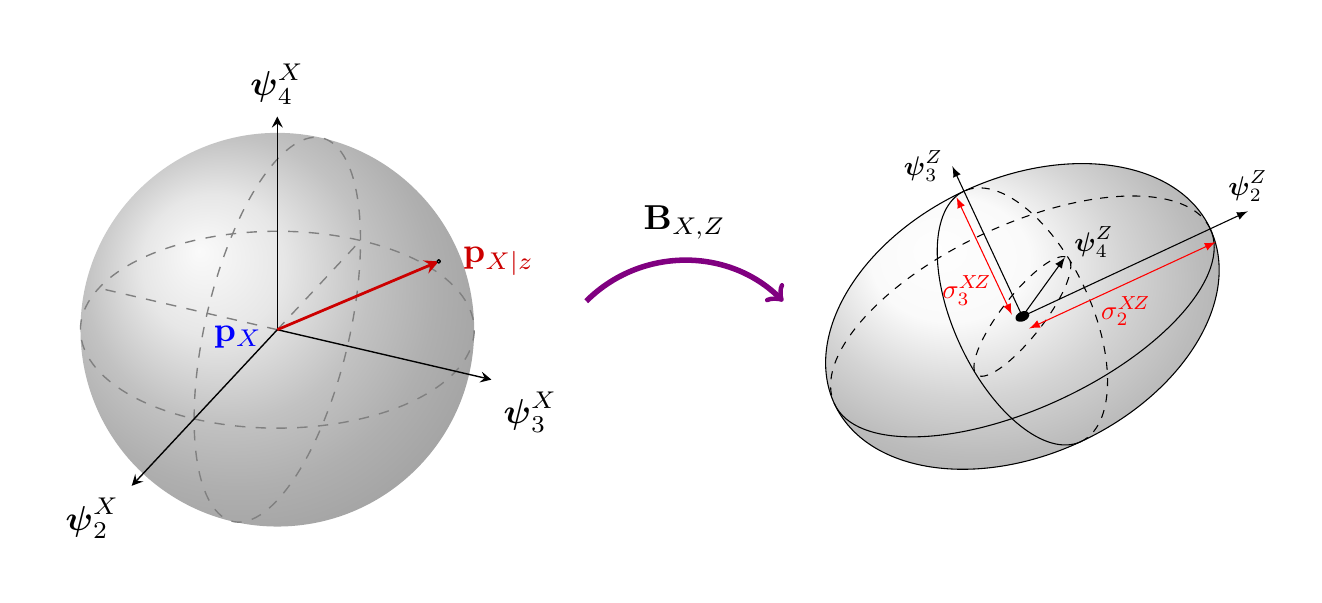}
\caption{The information geometry associated with the divergence transfer matrix $\mathbf{B}_{X,Z}$. Visualization for second, third and forth singular vectors, ignoring the first invalid direction. $\mathbf{B}_{X,Z}$ maps a local divergence sphere in $\mathcal{P}\! \left( \mathcal{X} \right)$ to a local divergence ellipsoid in $\mathcal{P} \! \left( \mathcal{Z} \right)$. }
\label{Fig:DTM_map_SVD}
\end{figure}

\pagebreak

To get insight of the optimization problem \eqref{PerfectObfuscationOptLocalProblem}, let us ignore the constraints \eqref{PerfectObfuscationOptLocalProblem2} and \eqref{PerfectObfuscationOptLocalProblem3}. Note that based on the constraint ($\mathrm{C1}$) the valid normalized perturbation $\mathbf{k}_{X \mid z}$ must be orthogonal to $\sqrt{\mathbf{p}_X}$, hence $\sqrt{\mathbf{p}_X}$ (the right singular vector of $\mathbf{B}_{U,X}$ corresponding to the largest singular value) is an invalid direction to perturb pmf. Letting $\sigma_2^{\mathrm{UX}}$ be the second largest singular value of $\mathbf{B}_{U,X}$, we have ${\Vert  \mathbf{B}_{U,X} \,  \mathbf{k}_{X \mid z} \Vert}^2 \leq {(\sigma_2^{\mathrm{UX}})}^2 \; {\Vert  \mathbf{k}_{X \mid z} \Vert}^2$. Therefore, under this assumption, the optimal solution to \eqref{PerfectObfuscationOptLocalProblem1} is to choose the perturbation $\mathbf{k}_{X \mid z}$ to be along the right singular vector of $\mathbf{B}_{U,X}$ corresponding to the second largest singular value. Note that all the unit norm right singular vectors of $\mathbf{B}_{U,X}$ which are orthogonal to $\sqrt{\mathbf{p}_X}$ are a valid perturbation. This means that any linear combination of these singular vectors also valid candidates for $\{ \mathbf{k}_{X \mid z}  , z \in \mathcal{Z}\}$.


\vspace{5pt}

Let $\mathsf{Range} \left( \mathbf{B}_{U,X}  \right)= \{ \mathbf{B}_{U,X} \, \mathbf{k}_{X\mid z} \mid \mathbf{k}_{X\mid z}  \in \mathbb{R}^{\vert \mathcal{X} \vert} \}  \subseteq \mathbb{R}^{\vert \mathcal{U} \vert}$ denotes the range-space of $\mathbf{B}_{U,X}$,
and $\mathsf{Null} \left( \mathbf{B}_{S,X} \right) = \{ \mathbf{k}_{X\mid z} \mid \mathbf{B}_{S,X} \mathbf{k}_{X\mid z} = \boldsymbol{0} \} \subseteq \mathbb{R}^{\vert \mathcal{X} \vert}$ denotes the null-space of $\mathbf{B}_{S,X}$. We now have the following proposition.

\begin{proposition}
For perfect obfuscation data released model $\left( U\! , S \right) \markov X \markov Z$ under local information geometry analysis, the non-trivial perfect information obfuscation is feasible if and only if:
\begin{equation}
\mathrm{dim} \left( \, \mathsf{Range}\left( \mathbf{B}_{U,X} \right) \cap \mathsf{Null}\left( \mathbf{B}_{S,X}  \right) \,   \right) > 0. 
\end{equation}
\end{proposition}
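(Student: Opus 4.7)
The plan is to reduce the claim to Proposition~\ref{proposition_PerfectObfuscation1} by re-expressing its two algebraic conditions in the language of the divergence transfer matrices. Using Definition~\ref{Def:DivergenceTransferMatrix} for the pairs $(S,X)$ and $(U,X)$ and treating $\mathbf{k}_{X\mid z}$ as a column vector in $\mathbb{R}^{|\mathcal{X}|}$, the perfect-obfuscation condition $\mathbf{W}_S\,\mathbf{k}_{X\mid z}\,\mathsf{diag}(\sqrt{\mathbf{p}_X}) = \boldsymbol{0}$ is equivalent, after left-multiplying by the invertible diagonal matrix $\mathsf{diag}(\sqrt{\mathbf{p}_S})^{-1}$, to $\mathbf{B}_{S,X}\,\mathbf{k}_{X\mid z} = \boldsymbol{0}$; analogously the non-triviality condition becomes $\mathbf{B}_{U,X}\,\mathbf{k}_{X\mid z} \neq \boldsymbol{0}$. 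Hence feasibility reduces to the existence of a vector $\mathbf{k}\in\mathsf{Null}(\mathbf{B}_{S,X})$ with $\mathbf{B}_{U,X}\mathbf{k} \neq \boldsymbol{0}$, which is exactly the requirement that the image $\mathbf{B}_{U,X}\bigl(\mathsf{Null}(\mathbf{B}_{S,X})\bigr)$ --- the ``$\mathsf{Range}(\mathbf{B}_{U,X})\cap \mathsf{Null}(\mathbf{B}_{S,X})$'' of the statement --- be of positive dimension.

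The forward direction $(\Rightarrow)$ is then immediate: any feasible construction supplies, through Proposition~\ref{proposition_PerfectObfuscation1}, a $\mathbf{k}_{X\mid z}$ lying in $\mathsf{Null}(\mathbf{B}_{S,X})$ whose image under $\mathbf{B}_{U,X}$ is nonzero. For the converse $(\Leftarrow)$, I would explicitly construct $Z$: fix a nonzero $\mathbf{k}\in\mathsf{Null}(\mathbf{B}_{S,X})$ with $\mathbf{B}_{U,X}\mathbf{k}\neq\boldsymbol{0}$, take $|\mathcal{Z}|=2$ with $p_Z(z_1)=p_Z(z_2)=1/2$, and set $\mathbf{k}_{X\mid z_1}=\mathbf{k}$, $\mathbf{k}_{X\mid z_2}=-\mathbf{k}$. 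Plugging these into \eqref{Eq:SphericallyPerturbation_Conditional} yields conditional pmfs that are valid for $\epsilon>0$ small enough; the symmetric choice ensures $\sum_z p_Z(z)\,\mathbf{p}_{X\mid Z=z}=\mathbf{p}_X$ (so $(\mathrm{C2})$ holds); the identity $\mathbf{B}_{S,X}\mathbf{k}=\boldsymbol{0}$ gives $\mathbf{p}_{S\mid Z=z}=\mathbf{p}_S$, i.e.\ $S\independent Z$; and $\mathbf{B}_{U,X}\mathbf{k}\neq\boldsymbol{0}$ gives $\mathbf{p}_{U\mid Z=z_1}\neq \mathbf{p}_U$, i.e.\ $U\notindependent Z$.

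The one step that demands real care is constraint $(\mathrm{C1})$: one must verify $\mathbf{k}^T\sqrt{\mathbf{p}_X}=0$ holds automatically for every $\mathbf{k}\in\mathsf{Null}(\mathbf{B}_{S,X})$. This follows from observing that $\sqrt{\mathbf{p}_X}$ is a right singular vector of $\mathbf{B}_{S,X}$ with singular value one: by row-stochasticity of $\mathbf{P}_{S\mid X}$, $\mathbf{B}_{S,X}^T\sqrt{\mathbf{p}_S} = \mathsf{diag}(\sqrt{\mathbf{p}_X})\,\mathbf{P}_{S\mid X}^T\boldsymbol{1} = \sqrt{\mathbf{p}_X}$, so $\sqrt{\mathbf{p}_X}\in \mathsf{Range}(\mathbf{B}_{S,X}^T)=\mathsf{Null}(\mathbf{B}_{S,X})^\perp$ and $(\mathrm{C1})$ holds for free. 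Recognizing this automaticity is the main subtlety, since it shows the perfect-obfuscation constraint alone already enforces the perturbation-validity constraint and no further intersection with the hyperplane $\{\mathbf{v}:\mathbf{v}^T\sqrt{\mathbf{p}_X}=0\}$ is needed.
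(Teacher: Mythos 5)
Your argument is correct and follows the same route as the paper's (very terse) proof: reduce to Proposition~\ref{proposition_PerfectObfuscation1} and observe that, since $\mathsf{diag}(\sqrt{\mathbf{p}_S})$ and $\mathsf{diag}(\sqrt{\mathbf{p}_U})$ are invertible, the two conditions there are equivalent to $\mathbf{B}_{S,X}\,\mathbf{k}_{X\mid z}=\boldsymbol{0}$ and $\mathbf{B}_{U,X}\,\mathbf{k}_{X\mid z}\neq\boldsymbol{0}$. You additionally supply two things the paper leaves implicit: the explicit two-point $Z$ that witnesses feasibility, and the check that $(\mathrm{C1})$ holds for free because $\sqrt{\mathbf{p}_X}=\mathbf{B}_{S,X}^T\sqrt{\mathbf{p}_S}\in\mathsf{Range}(\mathbf{B}_{S,X}^T)=\mathsf{Null}(\mathbf{B}_{S,X})^{\perp}$. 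You are also right that, as written, $\mathsf{Range}(\mathbf{B}_{U,X})\subseteq\mathbb{R}^{|\mathcal{U}|}$ and $\mathsf{Null}(\mathbf{B}_{S,X})\subseteq\mathbb{R}^{|\mathcal{X}|}$ live in different ambient spaces so the displayed intersection is ill-typed; the condition the reduction actually produces is $\mathrm{dim}\bigl(\mathbf{B}_{U,X}(\mathsf{Null}(\mathbf{B}_{S,X}))\bigr)>0$, equivalently $\mathsf{Null}(\mathbf{B}_{S,X})\not\subseteq\mathsf{Null}(\mathbf{B}_{U,X})$, which is what your proof establishes.
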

\begin{proof}
The proof follows by using Proposition~\ref{proposition_PerfectObfuscation1} and noting that Divergence Transfer Matrix $\mathbf{B}_{U,X}$ (likewise $\mathbf{B}_{S,X}$) is an equivalent representation for $\mathbf{P}_{U,X}$ (likewise $\mathbf{P}_{S,X}$) and, in turn, $\mathbf{P}_{U \mid X}$ (likewise $\mathbf{P}_{S \mid X}$). 
\end{proof}



We now relate the solutions of \eqref{PerfectObfuscationOptLocalProblem} to locally sufficient statistics for inferences about utility attribute $U$ based on $Z$. 
Let us consider an arbitrary embedding (feature) $f: \mathcal{X} \rightarrow \mathbb{R}$ and let $g: \mathcal{Z} \rightarrow \mathbb{R}$ be the embedding (feature) induced by $f$ through conditional expectation with respect to $\mathbf{p}_{X\mid Z=z}$. We have:
\begin{equation}\label{Eq:conditional_expectation}
g(z) = \mathbb{E} \left[  f(X)  \mid Z = z \right], \quad z \in \mathcal{Z}. 
\end{equation}
We can recast \eqref{Eq:conditional_expectation} as:
\begin{eqnarray}\label{Eq:conditional_expectation_byDTM}
g (z) \!\! &=& \!\!  \frac{1}{p_Z(z)} \sum_{x \in \mathcal{X}} p_{X,Z} (x, z) f(x) \nonumber \\
&=& \!\!  \frac{1}{\sqrt{p_Z (z)}} \sum_{x \in \mathcal{X}} B_{X,Z} (x,z) \sqrt{p_X(x)} f(x),  
\end{eqnarray}
where $B_{\!X,Z} (x,z) \!\!=\! \!\frac{p_{X,Z} (x,z)}{\sqrt{p_{X}(x)} \sqrt{p_Z(z)}}, \forall x \!  \in \! \mathcal{X}, z  \! \in  \! \mathcal{Z}$ is the $(x,z)$-th entry of $\mathbf{B}_{X,Z}$. 
We now define $\xi^{X} (x) \coloneqq \sqrt{p_X(x)} f(x)$ and $\xi^{Z} (z) \coloneqq \sqrt{p_Z(z)} g(z)$, $\forall x \in \mathcal{X}, z \in \mathcal{Z}$. 
Then we can express \eqref{Eq:conditional_expectation_byDTM} as:
\begin{equation}
\xi^Z (z) = \sum_{x \in \mathcal{X}} B_{X,Z} (x,z)\,  \xi^X (x). 
\end{equation}
The vectors $\boldsymbol{\xi}^X$ and $\boldsymbol{\xi}^Z$ whose $x$-th and $z$-th entries are $\xi^{X} (x), \forall x \in \mathcal{X}$ and $\xi^{Z} (z), \forall z \in \mathcal{Z}$, respectively, can be referred as feature vectors associated with the feature functions $f$ and $g$. 

According to \eqref{DTM_XZ_SVD} and proof of proposition~\ref{Proposition_LocalApprox}, we have $B_{X,Z} (x,z) = \sum_{i=1}^{K} \sigma_i^{XZ} \psi_i^X (x) \psi_i^Z (z) = \sqrt{p_X(x)} \sqrt{p_Z(z)} + \sum_{i=2}^{K} \sigma_i^{XZ} \psi_i^X (x) \psi_i^Z (z)$. We now define features $f_i^\ast : \mathcal{X} \rightarrow \mathbb{R}$ and $g_i^\ast : \mathcal{X} \rightarrow \mathbb{R}$, for $i = 2, 3, ..., K$, as follows:
\begin{subequations}
\begin{align}
f_i^\ast (x)   \coloneqq \frac{\psi_i^X(x) }{\sqrt{p_X(x)}}, \\
g_i^\ast (z)  \coloneqq \frac{\psi_i^Z(z) }{\sqrt{p_Z(z)}}. 
\end{align}
\end{subequations}
Hence we have:
\begin{equation}
B_{X,Z}(x,z) = \sqrt{p_X(x)} \sqrt{p_Z(z)} \bigg( 1 + \sum_{i=2}^{K} \sigma_i^{XZ} f_i^\ast (x)  g_i^\ast (z)   \bigg). 
\end{equation}
Noting that $P_{X,Z} (x,z) = B_{X,Z} (x,z) \sqrt{p_X(x)} \sqrt{p_Z(z)} = p_X(x) p_Z(z) \big( 1 + \sum_{i=2}^{K} \sigma_i^{XZ} f_i^\ast (x)  g_i^\ast (z) \big)$, we have modal decomposition of joint distributions, conditional distributions, and mutual information in terms of feature functions $\left( f_i^\ast, g_i^\ast \right), i = 2, 3, ..., K$. 
Hence, the valid perturbation directions in optimization problem \eqref{PerfectObfuscationOptLocalProblem} give us the corresponding valid feature functions, as well as, associated locally normalized sufficient statistics for inferences about $U$ based on $Z$, under perfect obfuscation constraint.




\section{Conclusion}

Adopting a local information geometry analysis and considering mutual information as both obfuscation and utility measure, we studied a data released mechanism for a given utility task, and under perfect obfuscation constraint. The addressed model subsumes both the Information Bottleneck model and the Privacy Funnel model. 
We studied the notion of perfect obfuscation based on $\chi^2$-divergence and Kullback–Leibler divergence in the Euclidean information space.
Furthermore, we characterized the necessary and sufficient conditions under which a non-trivial solution is feasible. 





\appendices

\section{Proof of Proposition~\ref{Proposition_LocalApprox}}
\label{Appendix_ProofLocalApprox}
\begin{proof}
\begin{subequations}
\begin{align} 
 \I \left( X; Z \right)   &=    \sum_{z} p_Z(z) \,  \D_{\mathrm{KL}} \left( \mathbf{p}_{X \mid Z=z}\,  \Vert \,  \mathbf{p}_X \right) \\
 &= \frac{1}{2} \, \epsilon^2 \sum_{z\in \mathcal{Z}} p_Z(z) \, {\Vert \mathbf{k}_{X\mid z}  \Vert}_2^2 + o \left( \epsilon^2 \right)
 \\
&=     \frac{1}{2} \epsilon^2 \sum_{z,x} p_Z(z) {\left( \frac{p_{X \mid Z} (x \mid z) - p_X(x)}{\epsilon \, \sqrt{p_X(x)}} \right)}^2 + o \left( \epsilon^2 \right) 
 \\
&=   \frac{1}{2} \sum_{z, x} {\left(  \frac{p_{X, Z} (x, z) - p_X(x) p_Z(z)}{\sqrt{p_X(x)} \sqrt{p_Z(z)}} \right) }^2 + o \left( \epsilon^2 \right)
  \\
&=  
\frac{1}{2} { \Vert \mathbf{B}_{X,Z} - \sqrt{\mathbf{p}_X} \sqrt{\mathbf{p}_Z}^T   \Vert  }_{\mathrm{F}}^2 + o \left( \epsilon^2 \right)
  \\
&=   \frac{1}{2} \left(  {\Vert \mathbf{B}_{X,Z}  \Vert}_{\mathrm{F}}^2  - 1 \right)+ o \left( \epsilon^2 \right),\label{I_XZ_Frob_DTM}
\end{align}
\end{subequations}

\pagebreak

\begin{subequations}\label{Eq:I_UZ_local_proof}
\begin{align}
\I \left( U; Z \right)  & =   \sum_{z} p_Z(z) \,  \D_{\mathrm{KL}} \left( \mathbf{p}_{U \mid Z=z}\,  \Vert \,  \mathbf{p}_U \right) \! \\
& =  
 \frac{1}{2} \, \epsilon^2 \sum_{z\in \mathcal{Z}} p_Z(z) \cdot \nonumber \\
 &\qquad  {\Vert \mathsf{diag}\! \left( \! \sqrt{\mathbf{p}_U} \right)^{-1} \! \mathbf{W}_U \mathsf{diag}\! \left(\! \sqrt{\mathbf{p}_X} \right) \mathbf{k}_{X\mid z}  \Vert}_2^2\!  +   o\! \left( \epsilon^2 \right)   \\
& =  
 \frac{1}{2} \, \epsilon^2 \sum_{z\in \mathcal{Z}} p_Z(z) \, {\Vert  \mathbf{B}_{U,X} \,  \mathbf{k}_{X \mid z} \Vert}_2^2 + o \left( \epsilon^2 \right) 
 \\
&=   \frac{1}{2} \epsilon^2 \sum_{z,u} p_Z(z) {\left( \frac{p_{U \mid Z} (u \mid z) - p_U(u)}{\epsilon \, \sqrt{p_U (u)}} \right)}^2 + o \left( \epsilon^2 \right)
  \\
&=    \frac{1}{2} \sum_{z, x} {\left(  \frac{p_{X, Z} (x, z) - p_X(x) p_Z(z)}{\sqrt{p_X(x)} \sqrt{p_Z(z)}} \right) }^2 + o \left( \epsilon^2 \right)
  \\
&=  
\frac{1}{2} { \Vert \mathbf{B}_{U,Z} - \sqrt{\mathbf{p}_U} \sqrt{\mathbf{p}_Z}^T   \Vert  }_{\mathrm{F}}^2 
+ o \left( \epsilon^2 \right)
  \\
&=  
 \frac{1}{2} \left(  {\Vert \mathbf{B}_{U,Z}  \Vert}_{\mathrm{F}}^2  - 1 \right)+ o \left( \epsilon^2 \right). \label{I_UZ_Frob_DTM}
\end{align}
\end{subequations}
%
The equalities \eqref{I_XZ_Frob_DTM} and \eqref{I_UZ_Frob_DTM} follow by noticing that the largest singular value of divergence transfer matrices $\mathbf{B}_{X,Z}$ and $\mathbf{B}_{U,Z}$ are 1, i.e., their spectral norm is equal to one. 
Note that $\mathbf{B}_{X,Z}$ and $\mathbf{B}_{U,Z}$ originate from the column stochastic transition matrices of conditional probabilities $\mathbf{P}_{X \mid Z}$ and $\mathbf{P}_{U \mid Z}$, respectively. 
Therefore, the corresponding right (input) singular vectors
are as follows:
\begin{subequations}
\begin{align}
\boldsymbol{\psi}_1^X = \mathbf{B}_{X,Z} \, \sqrt{\mathbf{p}_Z} & = \sigma_1^{\mathrm{XZ}}  \sqrt{\mathbf{p}_X} = \sqrt{\mathbf{p}_X},
\\
\boldsymbol{\psi}_1^U = \mathbf{B}_{U,Z} \, \sqrt{\mathbf{p}_Z} & = \sigma_1^{\mathrm{UZ}} \sqrt{\mathbf{p}_U}  = \sqrt{\mathbf{p}_U}.
\end{align}
\end{subequations}
The local approximation of information leakage $\I \left( S; Z \right)$ derivation follows similar lines as \eqref{Eq:I_UZ_local_proof}.
\end{proof}

\vspace{1pt}


\bibliographystyle{IEEEtran}
\bibliography{references}

\end{document}